\DeclarePairedDelimiter{\set}{\{}{\}}
\DeclarePairedDelimiter{\tuple}{(}{)}
\DeclarePairedDelimiter{\abs}{\lvert}{\rvert}
\let\oldleq\leq
\renewcommand{\leq}[1][]{\oldleq_{#1}}
\renewcommand{\implies}{\rightarrow}
\newcommand{\bicond}{\leftrightarrow}
\newcommand{\poset}[1]{\mathcal{#1}}
\newcommand{\uni}[1][]{\Omega_{#1}}
\newcommand{\lang}[1]{L(#1)}
\newcommand{\lin}[1]{\texttt{#1}}
\newcommand{\swap}[1][]{\leftrightarrow_{#1}}
\newcommand{\sgraph}[1]{G(#1)}
\newcommand{\lext}{\sqsubseteq}
\newcommand{\incomp}[1][]{\parallel_{#1}}
\newcommand{\covered}[1][]{\prec_{#1}}
\newcommand{\complmt}[1]{\overline{#1}}
\newcommand{\satvar}[2]{\mathtt{X}_{#1}^{#2}}
\newcommand{\bigo}[1]{\mathcal{O}(#1)}
\newcommand{\swapfn}[2]{#1[#2]}
\newcommand{\inv}[2]{inv(#1,#2)}
\begin{document}

\title{
    SAT-Solving the Poset Cover Problem
    \thanks{This work was conducted in 2017--2018 while the first author
    was an undergraduate student at National Taiwan Normal University
    and an adjunct research assistant at Academia Sinica.}
}

\author{Chih-Cheng Rex Yuan \and
        Bow-Yaw Wang}

\institute{
    Institute of Information Science, Academia Sinica, Taipei, Taiwan \\
    \email{hello@rexyuan.com, bywang@iis.sinica.edu.tw}
}

\maketitle

\begin{abstract}
The poset cover problem seeks a minimum set of partial orders whose linear extensions cover a given set of linear orders. Recognizing its NP-completeness, we devised a non-trivial reduction to the Boolean satisfiability problem using a technique we call swap graphs, which avoids the complexity explosion of the naive method. By leveraging modern SAT solvers, we efficiently solve instances with reasonable universe sizes. Experimental results using the Z3 theorem prover on randomly generated inputs demonstrate the effectiveness of our method.

\keywords{Partial Order \and SAT-Solving \and Reduction}
\end{abstract}

\section{Introduction}

Orders have been in our lives as long as humans can order things. We organize our tasks, prioritize our goals, and rank our preferences. Orders are everywhere in our world.

Partial orders, a generalization of orders, are a mathematical construct that captures the essence of non-linear ordering. For example, we can order both the fields of physics and chemistry under the general category of natural science, but we cannot say one is under the other.

We can find partial orders in various places in the wild. Scheduling tasks and analyzing biological sequences are but a few examples of partial orders that find themselves in\cite{heath2013poset,fernandez2013mining}.

Mathematicians have formalized the concept of partial orders as a set of elements with a binary relation that is reflexive, antisymmetric, and transitive. This structure is called a partially ordered set, or \emph{poset} for short.

By a foundational theorem called Szpilrajn extension theorem\cite{davey2002introduction}, every poset has a set of linear extensions. Intuitively, this means we can ``stretch'' a poset into a linear order. This is a powerful result that allows us to study posets by looking at their linearizations.

In this paper, we address the poset cover problem. The poset cover problem is a combinatorial problem that involves finding a minimum set of partial orders whose linear extensions cover a given set of linear orders.

The poset cover problem is known to be NP-complete\cite{heath2013poset}. We hence chose to leverage the power of modern SAT solvers to tackle this problem.

We propose a novel and non-trivial method to reduce the poset cover problem to a Boolean satisfiability problem by utilizing the concept of swap graphs. This approach efficiently handles problem constraints and avoids the exponential growth of naive encodings. Our method can solve instances of the poset cover problem with a reasonable universe size.

Our experimental results demonstrate the effectiveness of the proposed method. We extensively tested using the Z3\cite{de2008z3} theorem prover in Python on randomly generated inputs with varying universe and input sizes. The results show that our method performs well and could be further optimized by a simple divide-and-conquer heuristic.

Future work includes exploring further optimizations to handle larger inputs and investigating potential applications involving partial orders, such as message sequence charts and formal concept analysis.

\section{Preliminaries}
A \emph{partial order} is a binary relation that is reflexive, antisymmetric, and transitive. A \emph{partially ordered set} or \emph{poset} is a binary relational structure $\poset{P} = \tuple{\uni, \leq}$ where the \emph{universe} $\uni$ is a set and $\leq$ is a partial order on $\uni$; we refer to members of $\uni$ as elements of $\poset{P}$ and, where specificity is desired, to $\uni$ as $\uni[
\poset{P}]$ and $\leq$ as $\leq[\poset{P}]$.

% example:posetp
\begin{example}
    For the following definition, consider the poset $\poset{P}$ over $\set{a,b,c,d}$ with $a \leq b$\ ; $a \leq c$\ ; $a \leq d$\ ; $b \leq d$\ ; and $c \leq d$.
    \label{example:posetp}
\end{example}

The \emph{cover} relation $\covered$ of a poset $\poset{P}$ is the transitive reduction of its order relation; it describes the case of immediate successor: for $x, y \!\in\! \poset{P}$, $x \covered y$ if and only if $x \leq y$ and there is no $z \!\in\! \poset{P}$ such that $x \lneq z$ and $z \lneq y$. For Example~\ref{example:posetp}, the covered relation includes $a \covered b$\ ; $a \covered c$\ ; $b \covered d$\ ; and $c \covered d$. Note that $\tuple{a,d}$ is absent in $\covered$.

Notice that a poset is equivalent to an acyclic directed graph. Figure~\ref{figure:posetp} describes $\poset{P}$ with a graph $\tuple{V,E} = \tuple{\Omega,\covered}$. This is sometimes called the \emph{Hasse diagram} of a poset.

%\vspace{-15px}
% figure:posetp
\begin{figure}[h]
    \centering
    \begin{tikzpicture}
        [
        vertex/.style={circle,thick,draw,minimum size=2em},
        edge/.style={->,> = latex'}
        ]
    \node[vertex] (1) at (1,2) {$a$};
    \node[vertex] (2) at (0,1) {$b$};
    \node[vertex] (3) at (2,1) {$c$};
    \node[vertex] (4) at (1,0) {$d$};
    \draw[edge] (1) -- (2);
    \draw[edge] (1) -- (3);
    \draw[edge] (2) -- (4);
    \draw[edge] (3) -- (4);
    \end{tikzpicture}
    \caption{Graph representation of $\poset{P}$ from Example~\ref{example:posetp}.}
    \label{figure:posetp}
\end{figure}
%\vspace{-15px}

% example:posetl
\begin{example}
    For the following definition, consider the poset $\poset{L}$ over $\set{a,b,c,d}$ with $a \leq b$\ ; $a \leq c$\ ; $a \leq d$\ ; $b \leq c$\ ; $b \leq d$\ ; and $c \leq d$. Figure~\ref{figure:posetl} represents $\poset{L}$ with a graph $\tuple{V,E} = \tuple{\uni,\covered}$.
    \label{example:posetl}
\end{example}

%\vspace{-15px}
% figure:posetl
\begin{figure}[h]
    \centering
    \begin{tikzpicture}
        [
        vertex/.style={circle,thick,draw,minimum size=2em},
        edge/.style={->,> = latex'}
        ]
    \node[vertex] (1) at (0,3) {$a$};
    \node[vertex] (2) at (0,2) {$b$};
    \node[vertex] (3) at (0,1) {$c$};
    \node[vertex] (4) at (0,0) {$d$};
    \draw[edge] (1) -- (2);
    \draw[edge] (2) -- (3);
    \draw[edge] (3) -- (4);
    \end{tikzpicture}
    \caption{Graph representation of $\poset{L}$ from Example~\ref{example:posetl}.}
    \label{figure:posetl}
\end{figure}
%\vspace{-15px}

A partial order where every pair of elements is comparable is called a \emph{linear order}, and a poset $\poset{L}$ with such an order is called a \emph{linear poset}; that is, for $x, y \!\in\! \poset{L}$, $x \leq y$ or $y \leq x$. Note that the graph describing a linear poset is a path. For simplicity, we represent a linear poset in string form. For Example~\ref{example:posetl}, we write \lin{abcd}.

A linear poset $\poset{L}$ that extends a poset $\poset{P}$ is called a \emph{linear extension} or \emph{linearization} of $\poset{P}$, denoted $\poset{P} \lext \poset{L}$; that is, for posets $\poset{P},\poset{L}$ with $\uni[P] \!=\! \uni[L]$, $\poset{P} \lext \poset{L}$ if and only if $\poset{L}$ is linear and $\leq[P] \>\subseteq\> \leq[L]$. For example, for $\poset{P}$ from Example~\ref{example:posetp} and $\poset{L}$ from Example~\ref{example:posetl}, we have $\poset{P} \lext \poset{L}$. Note that a linearization of a poset is equivalent to a topological sort of the graph describing that poset.

Every poset admits a set of linearizations. The set of all linearizations of a poset $\poset{P}$ is denoted $\lang{\poset{P}}$. We shall consider it the \emph{language} of $\poset{P}$. For $\poset{P}$ from Example~\ref{example:posetp}, we have $\lang{\poset{P}} = \set{\lin{abcd},\lin{acbd}}$. For $\poset{L}$ from Example~\ref{example:posetl}, $\lang{\poset{L}} = \set{\lin{abcd}}$.

Now, we are ready to define the \emph{poset cover problem}.

% definition:pcp
\begin{definition}[Poset Cover Problem]
    Given a set of linear posets $\Upsilon$, find a set of partial orders $C$, called a cover, such that $\abs{C}$ is minimal and the union of the languages of posets in $C$ is equal to $\Upsilon$; that is, $\Upsilon = \bigcup_{\poset{P} \in C} \lang{\poset{P}}$.

    Note that the poset cover problem is equivalent to finding a minimal set of graphs whose topological sorts yield the given set of paths.
    \label{definition:pcp}
\end{definition}

% example:cover example
\begin{example}
    Given the set of linear posets $\set{\lin{abdce},\lin{badce},\lin{abcde},\lin{abdec}}$ over $\set{a,b,c,d,e}$, a minimal poset cover of two posets $\poset{A},\poset{B}$ is shown in Figure~\ref{figure:cover example c} with ${\lang{\poset{A}} = \set{\lin{abdce},\lin{abcde},\lin{abdec}}}$ and ${\lang{\poset{B}} = \set{\lin{badce}}}$.
    \label{example:cover example}
\end{example}

%\vspace{-15px}
% figure:cover example
\begin{figure}[h]
    \centering
    % figure:cover example c
    \begin{subfigure}[b]{0.5\textwidth}
        \centering
        \begin{subfigure}[b]{0.4\textwidth}
            \centering
            \begin{tikzpicture}
                [
                vertex/.style={circle,thick,draw,minimum size=2em},
                edge/.style={->,> = latex'}
                ]
            \node[vertex] (1) at (1,4) {$a$};
            \node[vertex] (2) at (1,3) {$b$};
            \node[vertex] (3) at (0.5,2) {$d$};
            \node[vertex] (4) at (0.5,1) {$e$};
            \node[vertex] (5) at (1.5,2) {$c$};
            \draw[edge] (1) -- (2);
            \draw[edge] (2) -- (3);
            \draw[edge] (3) -- (4);
            \draw[edge] (2) -- (5);
            \end{tikzpicture}
            \caption*{Poset $\poset{A}$}
        \end{subfigure}%
        \begin{subfigure}[b]{0.4\textwidth}
            \centering
            \begin{tikzpicture}
                [
                vertex/.style={circle,thick,draw,minimum size=2em},
                edge/.style={->,> = latex'}
                ]
            \node[vertex] (6) at (4,4) {$b$};
            \node[vertex] (7) at (4,3) {$a$};
            \node[vertex] (8) at (4,2) {$d$};
            \node[vertex] (9) at (4,1) {$c$};
            \node[vertex] (10) at (4,0) {$e$};
            \draw[edge] (6) -- (7);
            \draw[edge] (7) -- (8);
            \draw[edge] (8) -- (9);
            \draw[edge] (9) -- (10);
            \end{tikzpicture}
            \caption*{Poset $\poset{B}$}
        \end{subfigure}
        \caption{A minimal cover.}
        \label{figure:cover example c}
    \end{subfigure}%
    % figure:cover example c'
    \begin{subfigure}[b]{0.5\textwidth}
        \centering
        \begin{subfigure}[b]{0.4\textwidth}
            \centering
            \begin{tikzpicture}
                [
                vertex/.style={circle,thick,draw,minimum size=2em},
                edge/.style={->,> = latex'}
                ]
            \node[vertex] (1) at (1,4) {$a$};
            \node[vertex] (2) at (1,3) {$b$};
            \node[vertex] (3) at (0.5,2) {$d$};
            \node[vertex] (4) at (0.5,1) {$e$};
            \node[vertex] (5) at (1.5,2) {$c$};
            \draw[edge] (1) -- (2);
            \draw[edge] (2) -- (3);
            \draw[edge] (3) -- (4);
            \draw[edge] (2) -- (5);
            \end{tikzpicture}
            \caption*{Poset $\poset{C}$}
        \end{subfigure}%
        \begin{subfigure}[b]{0.4\textwidth}
            \centering
            \begin{tikzpicture}
                [
                vertex/.style={circle,thick,draw,minimum size=2em},
                edge/.style={->,> = latex'}
                ]
            \node[vertex] (6) at (3.5,4) {$a$};
            \node[vertex] (7) at (4.5,4) {$b$};
            \node[vertex] (8) at (4,3) {$d$};
            \node[vertex] (9) at (4,2) {$c$};
            \node[vertex] (10) at (4,1) {$e$};
            \draw[edge] (6) -- (8);
            \draw[edge] (7) -- (8);
            \draw[edge] (8) -- (9);
            \draw[edge] (9) -- (10);
            \end{tikzpicture}
            \caption*{Poset $\poset{D}$}
        \end{subfigure}
        \caption{Another minimal cover.}
        \label{figure:cover example c'}
    \end{subfigure}
    \caption{Minimal covers for the set $\set{\lin{abdce},\lin{badce},\lin{abcde},\lin{abdec}}$.}
    \label{figure:cover example}
\end{figure}
%\vspace{-15px}

A minimal poset cover may not be unique, and the languages of the posets in a cover may overlap. For Example~\ref{example:cover example}, there is another minimal poset cover, shown in Figure~\ref{figure:cover example c'}, with overlapping languages ${\lang{\poset{C}} = \set{\lin{abdce},\lin{abcde},\lin{abdec}}}$ and ${\lang{\poset{D}} = \set{\lin{abdce},\lin{badce}}}$.

\section{Reduction to SAT}
Since the poset cover problem is proved to be NP-complete \cite{heath2013poset}, we reduce the problem to SAT to utilize the power of modern SAT solvers. However, a naive encoding would easily result in superpolynomial transformation. We shall first show where the naive encoding falls short and then present a more efficient method.

\subsection{Basic Constraints}
Before getting into constraints given by the problem, here we give the definitions and constraints for the basic poset construction.

\subsubsection{Variable Semantics}
For a poset $\poset{P}$ and $x \!\neq\! y \!\in\! \uni$, we encode with the propositional variable $\satvar{x,y}{\poset{P}}$ the case that $x \leq[\poset{P}] y$.

\subsubsection{Poset Axioms}
Only antisymmetry and transitivity are relevant in our reduction. For a poset $\poset{P}$, we enforce antisymmetry with
\[
\bigwedge_{x \neq y \in \uni} \neg (\satvar{x,y}{\poset{P}} \wedge \satvar{y,x}{\poset{P}})
\]
and transitivity with
\[
\bigwedge_{x \neq y \neq z \in \uni}
\satvar{x,y}{\poset{P}} \wedge \satvar{y,z}{\poset{P}} \implies \satvar{x,z}{\poset{P}}
\]
Moreover, if $\poset{P}$ is linear, we add, on top of that
\[
\bigwedge_{x \neq y \in \uni} \satvar{x,y}{\poset{P}} \vee \satvar{y,x}{\poset{P}}
\]

\subsubsection{Linearization Constraints}
For posets $\poset{P},\poset{L}$ with $\poset{P} \lext \poset{L}$, we encode axioms for both $\poset{P}$ and $\poset{L}$ and add
\[
\bigwedge_{x \neq y \in \uni} \satvar{x,y}{\poset{P}} \implies \satvar{x,y}{\poset{L}}
\]
Within logical formulae, we shall abuse the notation $\poset{P} \lext \poset{L}$ as a macro for the above formula.
\label{subsubsec:linear}

\subsection{Problem Constraints}
Our goal is to find a minimal set $C$ of posets such that $\bigcup_{\poset{P} \in C} \lang{\poset{P}}$ is equal to the input set of linear posets $\Upsilon$. To ensure we find the minimal cover, we encode the case when $\abs{C} = k$, starting from $k = 1$, and incrementally query the SAT solver.

The poset cover problem can now be restated in logical formulae. Given $\Upsilon$ and the size of the cover $k$, we first construct $k$ posets by encoding their axioms. Next, we encode axioms for all the linear posets in $\Upsilon$ and then specify their given order relations accordingly.

We shall encode the problem constraints, $\Upsilon = \bigcup_{\poset{P} \in C} \lang{\poset{P}}$, in a twofold manner, by separately constraining $\Upsilon \subseteq \bigcup_{\poset{P} \in C} \lang{\poset{P}}$ and $\Upsilon \supseteq \bigcup_{\poset{P} \in C} \lang{\poset{P}}$.

\subsubsection{Naive Method} A straightforward way to encode the problem follows immediately from the definition.

For $\Upsilon \subseteq \bigcup_{\poset{P} \in C} \lang{\poset{P}}$, it effectively states that every $\poset{L} \!\in\! \Upsilon$ is a linearization of some $\poset{P} \!\in\! C$. We encode it as
\[
\bigwedge_{\poset{L} \in \Upsilon} \bigvee_{\poset{P} \in C} \poset{P} \lext \poset{L}
\]

For $\Upsilon \supseteq \bigcup_{\poset{P} \in C} \lang{\poset{P}}$, conversely, it states that every $\poset{L} \!\notin\! \Upsilon$ is not a linearization of all $\poset{P} \!\in\! C$. This is encoded as
\[
\bigwedge_{\poset{L} \in \complmt{\Upsilon}} \bigwedge_{\poset{P} \in C} \poset{P} \not\lext \poset{L}
\]

\begin{theorem}
    The constructed formulae are satisfiable if and only if the given case has a solution.
\end{theorem}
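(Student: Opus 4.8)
The plan is to show both directions of the biconditional by establishing a correspondence between satisfying assignments of the constructed formula and valid solutions (covers of size $k$) of the poset cover instance. Throughout, I would exploit the fact that a propositional assignment to the variables $\satvar{x,y}{\poset{P}}$ that satisfies the antisymmetry and transitivity axioms is, by construction, exactly the relation $\leq[\poset{P}]$ of some poset $\poset{P}$ on $\uni$, and likewise that the extra linearity axiom forces $\poset{P}$ to be a linear poset; similarly, the macro $\poset{P} \lext \poset{L}$ holds under an assignment if and only if the encoded relations satisfy $\leq[\poset{P}] \subseteq \leq[\poset{L}]$, which — once both are genuine posets and $\poset{L}$ is linear — is precisely the semantic statement $\poset{P} \lext \poset{L}$ as defined in the Preliminaries.

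For the forward direction, I would assume the formula is satisfiable and fix a satisfying assignment. From the poset-axiom conjuncts for the $k$ posets $\poset{P}_1, \dots, \poset{P}_k$, I extract genuine posets; let $C = \set{\poset{P}_1, \dots, \poset{P}_k}$. The conjunct $\bigwedge_{\poset{L} \in \Upsilon} \bigvee_{\poset{P} \in C} \poset{P} \lext \poset{L}$ guarantees that every $\poset{L} \in \Upsilon$ lies in $\lang{\poset{P}}$ for some $\poset{P} \in C$, i.e. $\Upsilon \subseteq \bigcup_{\poset{P} \in C} \lang{\poset{P}}$. The conjunct $\bigwedge_{\poset{L} \in \complmt{\Upsilon}} \bigwedge_{\poset{P} \in C} \poset{P} \not\lext \poset{L}$ guarantees that no linear poset outside $\Upsilon$ is a linearization of any $\poset{P} \in C$, i.e. $\bigcup_{\poset{P} \in C} \lang{\poset{P}} \subseteq \Upsilon$. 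Together these give $\Upsilon = \bigcup_{\poset{P} \in C} \lang{\poset{P}}$, so $C$ is a cover of size $k$; since we query incrementally from $k=1$ upward, the first $k$ for which the formula is satisfiable yields a cover of minimal size, which is a solution to the problem.

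For the backward direction, I would assume the instance has a solution and take a minimal cover $C = \set{\poset{P}_1, \dots, \poset{P}_k}$. I define an assignment by setting $\satvar{x,y}{\poset{P}_i}$ true exactly when $x \leq[\poset{P}_i] y$, and $\satvar{x,y}{\poset{L}}$ true exactly when $x \leq[\poset{L}] y$ for each $\poset{L} \in \Upsilon$ (and consistently for the $\poset{L} \in \complmt{\Upsilon}$ appearing as fixed linear orders in the $\supseteq$ constraint). Since each $\poset{P}_i$ and each $\poset{L}$ is a genuine poset — the $\poset{L}$'s linear — all the antisymmetry, transitivity, and linearity conjuncts are satisfied. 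Because $\Upsilon \subseteq \bigcup_i \lang{\poset{P}_i}$, each $\poset{L} \in \Upsilon$ extends some $\poset{P}_i$, so the $\subseteq$-disjunction is satisfied; because $\bigcup_i \lang{\poset{P}_i} \subseteq \Upsilon$, no $\poset{L} \in \complmt{\Upsilon}$ extends any $\poset{P}_i$, so the $\supseteq$-conjunction is satisfied. Hence the formula is satisfiable.

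The main obstacle, and the point deserving the most care, is the bridge between the syntactic and semantic readings of $\poset{P} \lext \poset{L}$: one must verify that an assignment satisfying the macro formula $\bigwedge_{x \neq y} \satvar{x,y}{\poset{P}} \implies \satvar{x,y}{\poset{L}}$ — in the presence of the axiom conjuncts that make $\poset{P}$ a poset and $\poset{L}$ a linear poset — corresponds to the inclusion $\leq[\poset{P}] \subseteq \leq[\poset{L}]$ and hence to $\poset{P} \lext \poset{L}$ in the sense of Definition in the Preliminaries, and conversely that its negation $\poset{P} \not\lext \poset{L}$ is faithfully captured. A secondary subtlety is the minimality argument: the theorem as stated speaks of ``a solution,'' so I must be explicit that the incremental querying (smallest $k$ with a satisfiable formula) is what delivers a \emph{minimal} cover, and that for each fixed $k$ the equivalence above is between satisfiability and existence of \emph{some} cover of that size. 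I do not expect the reverse direction to pose difficulty beyond bookkeeping over the index set of posets and the handling of the (exponentially many) $\poset{L} \in \complmt{\Upsilon}$, which foreshadows exactly the inefficiency the paper will address next.
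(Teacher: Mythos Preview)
Your proposal is correct and is exactly the standard two-direction argument one expects for the correctness of a SAT encoding. The paper itself offers no proof of this theorem at all --- it states it and immediately moves on to note the exponential blow-up in $\complmt{\Upsilon}$ --- so there is nothing substantive to compare against; your write-up simply fills in what the authors left as evident from the construction, and your care about the syntactic/semantic bridge for $\lext$ and about how incremental querying yields minimality is more than the paper provides.
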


Notice that encoding $\Upsilon \supseteq \bigcup_{\poset{P} \in C} \lang{\poset{P}}$ this way would result in exponential blow-up from $\complmt{\Upsilon}$ as there are $\abs{\uni}!$ permutations and hence possible linear posets over $\uni$.

\subsubsection{Swap Graph Method}
Since $\Upsilon \subseteq \bigcup_{\poset{P} \in C} \lang{\poset{P}}$ can be efficiently encoded with naive method, we improve on encoding the other direction. We have devised, by building upon the idea of swap graph, a more efficient reduction that requires some preprocessing using notions defined as follows.

The \emph{adjacent transposition}, or \emph{swap}, relation $\swap$ describes the case of ``off by one swap'' between linear posets with shared universe: for linear posets $\poset{L}_1,\poset{L}_2$, $\poset{L}_1 \swap \poset{L}_2$ if and only if there are $x, y \!\in\! \uni$ such that $\leq[\poset{L}_1] \cap \leq[\poset{L}_2] = {(\leq[\poset{L}_1] \cup \leq[\poset{L}_2])} - \set{\tuple{x,y},\tuple{y,x}}$; to specify which $x,y$ induce the relation, we write $\swap[x,y]$. For example, \lin{abcd} $\swap[b,c]$ \lin{acbd}. Note that $\swap$ is symmetric.

For a set of linear posets $\Upsilon$, the \emph{swap graph} $\sgraph{\Upsilon}$ of it is the undirected graph $\tuple{V,E} = \tuple{\Upsilon,\swap}$. Figure~\ref{figure:graphlp} shows the swap graph $\sgraph{\lang{\poset{P}}}$ for $\poset{P}$ from Example~\ref{example:posetp}. Conveniently, a swap graph built from the language of a poset is connected \cite{ruskey1992generating,pruesse1991generating,heath2013poset}. An iteration of the proof is also in \ref{appendix:connected-proof}.

% figure:graphlp
\begin{figure}[h]
    \centering
    \begin{tikzpicture}
        [
        vertex/.style={circle,thick,draw,minimum size=2em},
        edge/.style={thick}
        ]
    \node[vertex] (1) at (0,0) {\lin{abcd}};
    \node[vertex] (2) at (2,0) {\lin{acbd}};
    \draw[edge] (1) -- (2);
    \end{tikzpicture}
    \caption{Swap graph $\sgraph{\lang{\poset{P}}}$ for $\poset{P}$ from Example~\ref{example:posetp}.}
    \label{figure:graphlp}
\end{figure}
%\vspace{-15px}

We can exploit this property that a swap graph of poset language is connected to avoid exponential blow-up, by ``insulating'' each strongly connected component $\upsilon \subseteq \sgraph{\Upsilon}$ with a set $moat(\upsilon) = \set{\poset{L} \!\notin\! \upsilon \mid \exists \poset{L}' \!\in\! \upsilon \  \poset{L}' \swap \poset{L}}$ that encloses it. We denote by $moat(\Upsilon)$ the set $\bigcup_{\upsilon \in comp(\Upsilon)} moat(\upsilon)$, where $comp(\Upsilon)$ is the set of all strongly connected components in $\Upsilon$. Figure~\ref{figure:moat} illustrates this idea of moats for two strongly connected components.

\begin{figure}[h]
    \centering
    \begin{tikzpicture}
        \node[draw,ellipse,minimum height=2.5cm,minimum width=1.5cm,fill=white] (comp1) at (8,4) {comp1};

        \node[draw,ellipse,minimum height=2.5cm,minimum width=3.5cm,fill=white] (comp2) at (2,2) {comp2};

        \begin{pgfonlayer}{background}
            \node[draw,ellipse,minimum height=3.5cm,minimum width=2.5cm,fill=gray,fill opacity=0.2,text opacity=1] (comp1moat) at (8,4) {};
            \node[above=0.5mm] at (comp1moat.south) {moat1};

            \node[draw,ellipse,minimum height=3.5cm,minimum width=4.5cm,fill=gray,fill opacity=0.2,text opacity=1] (comp2moat) at (2,2) {};
            \node[above=0.5mm] at (comp2moat.south) {moat2};
        \end{pgfonlayer}

        \node[draw,fit=(comp1moat)(comp2moat)] (x) {};
    \end{tikzpicture}
    \caption{Illustration of moats around strongly connected components.}
    \label{figure:moat}
\end{figure}

An alternative way to encode $\Upsilon \supseteq \bigcup_{\poset{P} \in C} \lang{\poset{P}}$ is then
\[
\bigwedge_{\poset{L} \in moat(\Upsilon)} \bigwedge_{\poset{P} \in C} \poset{P} \not\lext \poset{L}
\]

The construction of $\sgraph{\Upsilon}$ and $moat(\Upsilon)$ together costs $\bigo{\abs{\Upsilon} \!\cdot\! \abs{\uni}}$ by going through each $\poset{L} \!\in\! \Upsilon$ and checking if $\poset{L}' \!\in\! \Upsilon$ for all $\poset{L}' \swap[x,y] \poset{L}$ where $x \covered[\poset{L}] y$, assuming constant membership query of $\Upsilon$.

\begin{proposition}
    Swap graph method is equivalent to naive method.
\end{proposition}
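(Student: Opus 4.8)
The plan is to show that the swap-graph encoding of $\Upsilon \supseteq \bigcup_{\poset{P} \in C} \lang{\poset{P}}$ is logically equivalent to the naive encoding, given the basic constraints (axioms for each $\poset{P} \in C$, axioms for each $\poset{L} \in \Upsilon$, the specified order of each $\poset{L}$, and the already-established subset direction $\Upsilon \subseteq \bigcup_{\poset{P} \in C} \lang{\poset{P}}$). Since the two encodings differ only in the conjunction ranging over $\complmt{\Upsilon}$ versus over $moat(\Upsilon)$, and $moat(\Upsilon) \subseteq \complmt{\Upsilon}$, one direction is immediate: any assignment satisfying the naive formula satisfies the swap-graph formula. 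The work is in the converse.

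For the converse, suppose an assignment satisfies all basic constraints, the subset direction, and $\bigwedge_{\poset{L} \in moat(\Upsilon)} \bigwedge_{\poset{P} \in C} \poset{P} \not\lext \poset{L}$; I must show $\poset{P} \not\lext \poset{L}$ for every $\poset{P} \in C$ and every $\poset{L} \in \complmt{\Upsilon}$. Fix such a $\poset{P}$ and suppose toward a contradiction that $\poset{P} \lext \poset{L}$ for some $\poset{L} \notin \Upsilon$, i.e.\ $\poset{L} \in \lang{\poset{P}}$. The key fact supplied by the excerpt is that $\sgraph{\lang{\poset{P}}}$ is connected. Using the subset direction together with the basic constraints, at least one element of $\lang{\poset{P}}$ lies in $\Upsilon$ (indeed every $\poset{L}' \in \Upsilon$ that the solver routes to $\poset{P}$ lies in $\lang{\poset{P}}$, and since $\Upsilon$ is covered, $\lang{\poset{P}} \cap \Upsilon \neq \emptyset$ whenever $\poset{P}$ is used; the degenerate case of an unused $\poset{P}$ must be handled separately, e.g.\ by noting an unused poset contributes nothing or by observing the minimality/$k$ bookkeeping). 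Now take a path in $\sgraph{\lang{\poset{P}}}$ from $\poset{L} \in \lang{\poset{P}} \setminus \Upsilon$ to some $\poset{L}^\ast \in \lang{\poset{P}} \cap \Upsilon$; walking along this path there is a first edge crossing from inside $\Upsilon$ to outside $\Upsilon$, i.e.\ consecutive linear posets $\poset{L}_i \in \Upsilon$ and $\poset{L}_{i+1} \notin \Upsilon$ with $\poset{L}_i \swap \poset{L}_{i+1}$ and $\poset{L}_{i+1} \in \lang{\poset{P}}$. By definition of $moat$, $\poset{L}_{i+1} \in moat(\upsilon)$ for the component $\upsilon$ of $\poset{L}_i$, hence $\poset{L}_{i+1} \in moat(\Upsilon)$; but then the satisfied moat clause forces $\poset{P} \not\lext \poset{L}_{i+1}$, contradicting $\poset{L}_{i+1} \in \lang{\poset{P}}$.

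It remains to connect ``$\poset{P} \lext \poset{L}$ under the assignment'' with ``$\poset{L} \in \lang{\poset{P}}$'' in the genuine semantic sense, and to confirm that the $moat$ is built over components of $\Upsilon$ (the input) while the path lives in $\sgraph{\lang{\poset{P}}}$ — so the transition edge is an edge of both swap graphs, which it is, because $\swap$ is defined on all linear posets over $\uni$ and does not depend on which graph we draw. I would make this precise by the remark that the basic constraints pin down, for each $\poset{P} \in C$, an actual partial order $\leq[\poset{P}]$, and the macro $\poset{P} \lext \poset{L}$ holds under the assignment exactly when the linear order encoded by $\poset{L}$ extends it. The main obstacle I anticipate is the bookkeeping around components: one must argue that the ``first exit edge'' of the path leaves a component $\upsilon$ of $\sgraph{\Upsilon}$ (not merely leaves $\Upsilon$ as a set) so that $moat(\upsilon)$ is the right set to invoke — but since $\poset{L}_i \in \Upsilon$ always lies in some component and $moat(\upsilon)$ collects every out-neighbor of $\upsilon$ not in $\upsilon$, and $\poset{L}_{i+1} \notin \Upsilon \supseteq \upsilon$, this goes through cleanly. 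A secondary subtlety is the case where $C$ contains a poset with no linearization in $\Upsilon$; under a correct encoding of the subset direction this cannot happen for a satisfying assignment unless that poset's language is entirely outside $\Upsilon$, which the moat clauses (via connectivity to... nothing) do not directly preclude — so I would either restrict attention to assignments arising from the full conjunction including the subset clause, or note that such a poset can be harmlessly discarded, leaving the equivalence of satisfiability intact.
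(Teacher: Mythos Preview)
Your proof is correct and follows essentially the same approach as the paper --- deriving a contradiction from the connectivity of $\sgraph{\lang{\poset{P}}}$ --- though you spell out explicitly the path-crossing-into-$moat(\Upsilon)$ argument that the paper compresses into the single phrase ``but then $\sgraph{\lang{\poset{P}}}$ is not connected.'' Your attention to the unused-poset edge case is in fact more careful than the paper's own proof, which tacitly assumes $\lang{\poset{P}} \cap \Upsilon \neq \emptyset$; your proposed remedy (discard or reassign such a $\poset{P}$) preserves equisatisfiability and is the right patch.
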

\begin{proof}
    Suppose the moat constraints hold. If the naive constraints do not hold, then there are some $\poset{L} \!\notin\! \Upsilon$ and $\poset{P} \!\in\! C$ such that $\poset{P} \lext \poset{L}$. Per assumption, $\poset{L} \!\notin\! moat(\Upsilon)$, but then $\sgraph{\lang{\poset{P}}}$ is not connected, a contradiction. The converse holds by definition, as $moat(\Upsilon) \subseteq \complmt{\Upsilon}$.
\end{proof}

\subsection{Some More Improvements}
\subsubsection{Reverting to Naivety}
When the input linearizations overwhelm the swap graph universe $\abs{moat(\Upsilon)} > \abs{\complmt{\Upsilon}}$ and the moat method becomes inefficient, we can revert back to the naive method. This is especially useful when the input is extremely dense.

\subsubsection{XOR Encoding}
Recall in section \ref{subsubsec:linear}, when encoding linear posets, we had to enforce total comparability by adding more constraints. We can avoid this requirement since it, combined with antisymmetry, is equivalent to an encoding using XOR($\oplus$).
\[
\bigwedge_{x \neq y \in \uni} \neg (\satvar{x,y}{\poset{P}} \wedge \satvar{y,x}{\poset{P}})
\wedge
\bigwedge_{x \neq y \in \uni} \satvar{x,y}{\poset{P}} \vee \satvar{y,x}{\poset{P}}
\bicond
\bigwedge_{x \neq y \in \uni} \satvar{x,y}{\poset{P}} \oplus \satvar{y,x}{\poset{P}}
\]

\subsubsection{Skipping $\poset{L}$}
To encode $\poset{P} \lext \poset{L}$, we had to encode axioms for both posets. However, since the problem input includes complete order relation $\leq[\poset{L}]$ of all given $\poset{L} \!\in\! \Upsilon$, we can skip the encodings for $\poset{L}$ and set each variable constrained by $\poset{L}$ directly on $\poset{P}$, by taking the contrapositive form of $\lext$. This way, it is possible to lose all the variables for $\poset{L}$ and the associated constraints.
\[
\bigwedge_{x \neq y \in \uni} \satvar{x,y}{\poset{P}} \implies \satvar{x,y}{\poset{L}}
\bicond
\bigwedge_{x \neq y \in \uni} \neg \satvar{x,y}{\poset{L}} \implies \neg \satvar{x,y}{\poset{P}}
\bicond
\bigwedge_{x,y \in \complmt{\leq[\poset{L}]}} \neg \satvar{x,y}{\poset{P}}
\]

\subsubsection{Divide and Conquer}
Finally, with swap graph method, we need not encode the entire problem input in one go. Instead, we can tackle each component $\upsilon \!\in\! comp(\Upsilon)$ individually, which further reduces the number of clauses, with trade-off being the number of SAT queries polynomial in $\abs{\Upsilon}$. This is especially efficient when the input is sparse.

\section{A Special Case}
Although, in general, solving the poset cover problem is hard, in the case that the given set of linearizations constitutes the language of a single poset, the problem can be solved in polynomial time. We can consider this special case as an opening jab at the problem, returning if and once it succeeds.

We know, corollarily from Szpilrajn extension theorem\cite{davey2002introduction}, that a poset is exactly the intersection of all its linearizations; that is, $\leq[\poset{P}] = \bigcap_{\poset{L} \in \lang{\poset{P}}} \leq[\poset{L}]$ for any poset $\poset{P}$. Additionally, with Corollary~\ref{corollary:final}, we can determine if a set of linearizations constitutes the language of a single poset in time $\bigo{\abs{\Upsilon} \!\cdot\! \abs{\uni}}$, and the solution to the poset cover problem is then the poset formed by the intersection of all linearizations.

\section{Experimentation}
To test the limits of our method, we experimented on distinct randomly generated inputs using the Z3\cite{de2008z3} theorem prover from Microsoft Research in Python.

As inputs with a sparse swap graph can be quickly divided and reduced to smaller sub-problems, we restricted our inputs to those with a single connected swap graph in order to maximize the the difficulty of the sub-problem.

We ran 100 trials, with solver timeout set to 15 minutes, on each of the different settings of universe size $1 \leq \abs{\uni} \leq 10$ and input size $1 \leq \abs{\Upsilon} \leq 100$. Figure~\ref{figure:exp} shows the number of timeouts out of the 100 trials in each case.

% figure:exp
\begin{figure}[h]
    \centering
    \includegraphics[width=0.9\linewidth]{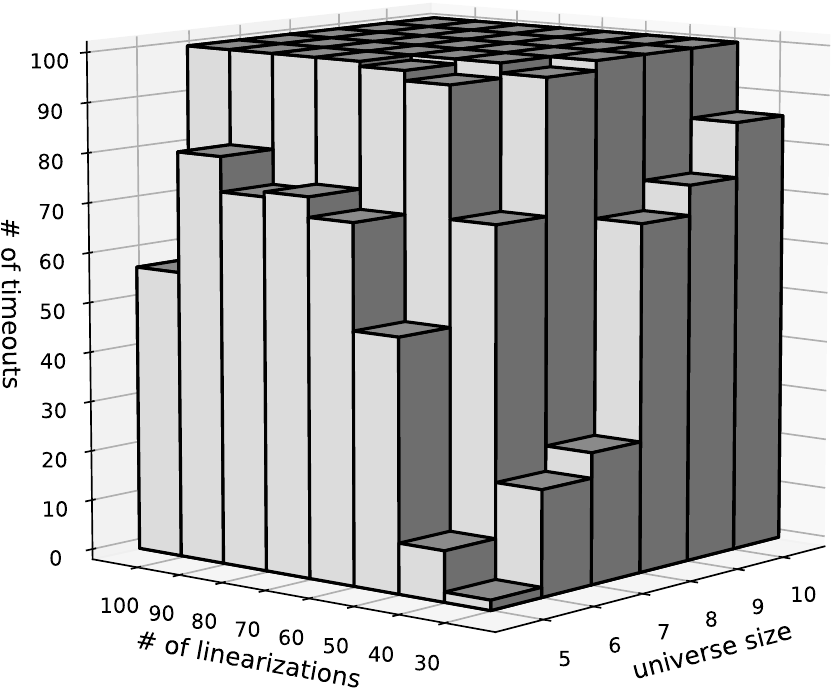}
    \caption{Experimental results.}
    \label{figure:exp}
\end{figure}

Cases where $\abs{\Upsilon} \leq 20$ or $\abs{\uni} \leq 4$ are omitted because they gave no timeouts under extensive testing (over 3000 trials) of combinations with the largest the other parameter: $(\abs{\Upsilon} = 100, \abs{\uni} = 4)$ and $(\abs{\Upsilon} = 20, \abs{\uni} = 10)$.

In the case of the swap graph being a single complex strongly connected component, the number of timeouts increases with the size of the input. These results show that our method is most effective in practice when each component is reasonably small to medium-sized.

In addition, considering the swap graph in practice may be sparse and not a single component, the method can be applied to larger inputs. The divide-and-conquer heuristic can quickly reduce the problem size, and the sub-problems can even be solved in parallel.

\begin{figure}[h]
    \centering
    \includegraphics[width=0.9\linewidth]{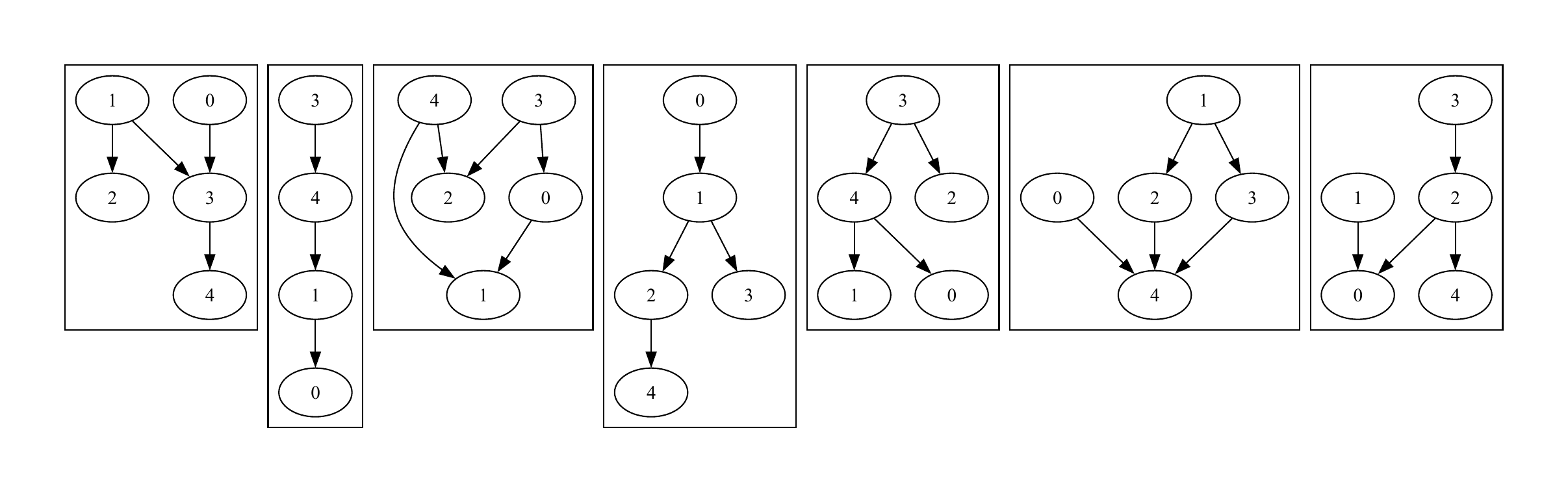}
    \caption{Poset cover of an example with $(\abs{\Upsilon} = 30, \abs{\uni} = 5)$.}
    \label{figure:cover_exp}
\end{figure}

An example poset cover output of a run with $(\abs{\Upsilon} = 30, \abs{\uni} = 5)$ is shown in Figure~\ref{figure:cover_exp}. Its corresponding swap graph is show in Figure~\ref{figure:swap_exp}.

\begin{figure}[h]
    \centering
    \includegraphics[width=0.9\linewidth]{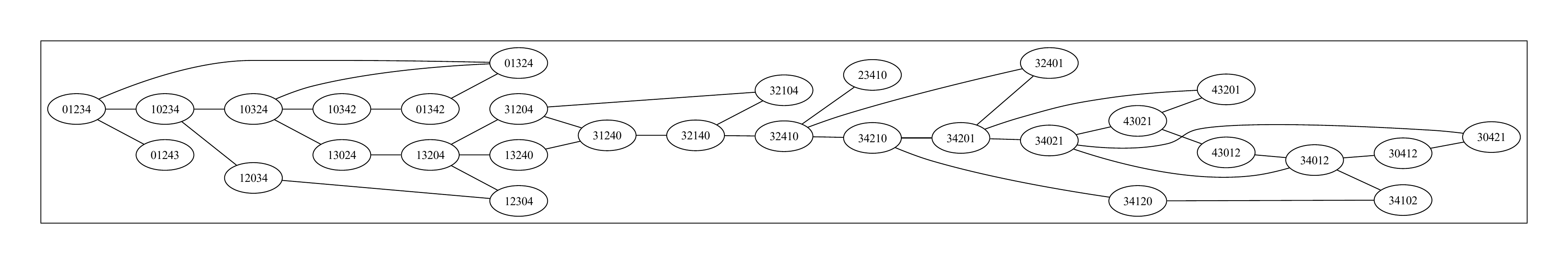}
    \caption{Swap graph of an example with $(\abs{\Upsilon} = 30, \abs{\uni} = 5)$.}
    \label{figure:swap_exp}
\end{figure}

\section{Conclusions}

We presented a novel and non-trivial method to reduce the poset cover problem to the Boolean satisfiability problem. We utilized what we call the ``swap graph'' to avoid exponential blow-up in naive encodings.

Swap graphs enabled a divide-and-conquer heuristic to quickly reduce the problem size. Even under the cases where the inputs cannot be divided into smaller sub-problems, our approach efficiently handles problem constraints and solves instances with a reasonable size.

Experimental results demonstrate the effectiveness of our method in practice. Future work includes exploring further optimizations to handle larger inputs and potential applications where the poset cover problems is relevant, such as message sequence charts and formal concept analysis.

\appendix
\renewcommand{\thesection}{\appendixname}
\section{Details}
\label{appendix:connected-proof}
Here we define the \emph{swapping} function. For a linear poset $\poset{L}$ and some $x,y \!\in\! \uni$ with $x \covered y$, $\swapfn{\poset{L}}{x,y}$ is a linear poset $\poset{L}'$ such that $\poset{L} \swap[x,y] \poset{L}'$.

A \emph{swapping sequence} of length $n$ from a linear poset $\poset{L}$ to a linear poset $\poset{L}'$ is a sequence of linear posets $\poset{L} = \poset{L}_0, \poset{L}_1, ... , \poset{L}_n = \poset{L}'$ obtained by collecting the breadcrumbs of recursively swapping $\poset{L}$ with some sequence of element pairs $x_1,y_1 \ ;\  ... \ ;\  x_n,y_n$ such that $\swapfn{\poset{L}}{x_1,y_1} ... [x_n,y_n] = \poset{L}'$; i.e., $\swapfn{\poset{L}}{x_1,y_1} ... [x_k,y_k] = \poset{L}_k$ for $0 < k < n$.

The length of a minimal swapping sequence from $\poset{L}$ to $\poset{L}'$ is the inversion number $\inv{\poset{L}}{\poset{L}'}$ sorting $\poset{L}$ to $\poset{L}'$, as each swapping in a sequence either increases or decreases the inversion number by $1$ \cite{ruskey1992generating}, and a minimal sequence is where it decreases every step; i.e., $\inv{\poset{L}_{k+1}}{\poset{L}'} = \inv{\poset{L}_k}{\poset{L}'} - 1$ for $0 \leq k < n$. It is also known as the Kendall tau distance, or the bubble-sort distance, between them, which counts the number of discordant pairs, i.e., inversions.

In addition, we describe the case of \emph{incomparability} in a poset $\poset{P}$ with $\incomp$ relation: for $x, y \!\in\! \poset{P}$, $x \incomp y$ if and only if $x \nleq y$ and $y \nleq x$. For Example~\ref{example:posetp}, there is $b \incomp c$.

\begin{lemma}
    For posets $\poset{P},\poset{L}$ with $\poset{P} \lext \poset{L}$, if $\incomp[\poset{P}]\> \neq \emptyset$, then there is $x,y$ with $x \incomp[\poset{P}] y$ such that $x \covered[\poset{L}] y$.
\end{lemma}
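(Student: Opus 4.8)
The plan is to argue by contradiction, exploiting the fact that $\poset{P} \lext \poset{L}$ forces $\poset{L}$ to be a linear order compatible with $\poset{P}$, and that the cover relation $\covered[\poset{L}]$ of a linear poset consists of exactly the consecutive pairs in the underlying string. So suppose $\incomp[\poset{P}] \neq \emptyset$ but for every pair $u, v$ with $u \covered[\poset{L}] v$ we have $u \leq[\poset{P}] v$ or $v \leq[\poset{P}] u$ (i.e. no covering pair of $\poset{L}$ is $\poset{P}$-incomparable). I would first pick a witness $a \incomp[\poset{P}] b$, and without loss of generality assume $a \leq[\poset{L}] b$ (one of the two holds since $\poset{L}$ is linear). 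Since $\poset{L}$ is a path, there is a finite chain $a = c_0 \covered[\poset{L}] c_1 \covered[\poset{L}] \cdots \covered[\poset{L}] c_m = b$ of consecutive elements of $\poset{L}$ running from $a$ up to $b$.

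The key step is to walk along this chain and track comparability in $\poset{P}$. By the contradiction hypothesis, each consecutive pair $c_i \covered[\poset{L}] c_{i+1}$ is $\poset{P}$-comparable; and because $c_i \leq[\poset{L}] c_{i+1}$ while $\poset{L}$ extends $\poset{P}$ (so $\poset{P}$ cannot order them the other way), we must in fact have $c_i \leq[\poset{P}] c_{i+1}$. Chaining these with transitivity of $\leq[\poset{P}]$ yields $a = c_0 \leq[\poset{P}] c_m = b$, contradicting $a \incomp[\poset{P}] b$. This closes the argument: some covering pair of $\poset{L}$ must be $\poset{P}$-incomparable, which is exactly the claim, and by construction that pair $x \covered[\poset{L}] y$ satisfies $x \incomp[\poset{P}] y$.

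Two small things need care rather than cleverness. First, I should justify that between any two elements of a linear poset there is indeed a covering chain: this is just the observation that a linear order on a finite set is a path in its Hasse diagram, already noted in the preliminaries, so the successor function is well defined and iterating it from $a$ reaches $b$ in finitely many steps. Second, I must be careful with the direction of the implication "$c_i \covered[\poset{L}] c_{i+1}$ and $\poset{P}$-comparable $\Rightarrow$ $c_i \leq[\poset{P}] c_{i+1}$": this uses $\leq[\poset{P}] \subseteq \leq[\poset{L}]$ together with antisymmetry of $\leq[\poset{L}]$ to rule out $c_{i+1} \leq[\poset{P}] c_i$.

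The main obstacle, such as it is, is purely organizational: setting up the covering chain cleanly and making the "comparable implies comparable in the same direction" step airtight. There is no deep difficulty — the statement is essentially the contrapositive of the fact that if $\poset{L}$'s covering pairs are all forced by $\poset{P}$ then $\poset{L}$ adds nothing to $\poset{P}$, i.e. $\poset{P}$ would already be linear.
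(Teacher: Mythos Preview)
Your argument is correct: the covering chain in $\poset{L}$ from $a$ to $b$, the direction step using $\leq[\poset{P}]\subseteq\leq[\poset{L}]$ and antisymmetry, and the transitive collapse are exactly what is needed, and the two points you flag as needing care are handled properly. The paper states this lemma without proof, so there is no alternative approach to compare against; what you have written is the standard argument and would serve as the missing proof.
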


\begin{lemma}[\cite{heath2013poset}]
    For posets $\poset{P},\poset{L}$ with $\poset{P} \lext \poset{L}$, if $x \incomp[\poset{P}] y$ and $x \covered[\poset{L}] y$, then there is $\poset{L}'$ such that $\poset{P} \lext \poset{L}'$ and $y \covered[\poset{L}'] x$.
\end{lemma}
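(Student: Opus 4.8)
The plan is to produce $\poset{L}'$ by a single adjacent transposition of $x$ and $y$ inside $\poset{L}$, and then check the two required properties directly.

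First I would apply the swapping function from the appendix: since $x \covered[\poset{L}] y$, the linear poset $\poset{L}' = \swapfn{\poset{L}}{x,y}$ is defined and satisfies $\poset{L} \swap[x,y] \poset{L}'$. Unwinding the definition of $\swap[x,y]$, the relations $\leq[\poset{L}]$ and $\leq[\poset{L}']$ agree on every ordered pair except $\tuple{x,y}$ and $\tuple{y,x}$, with $\tuple{x,y} \in \leq[\poset{L}]$ and $\tuple{y,x} \in \leq[\poset{L}']$. To obtain $y \covered[\poset{L}'] x$ I would note that $x \covered[\poset{L}] y$ says no element lies strictly between $x$ and $y$ in $\poset{L}$; since for every $z \neq x,y$ the pairs relating $z$ to $x$ and to $y$ are untouched by the swap, this adjacency persists in $\poset{L}'$ with the roles of $x$ and $y$ exchanged, i.e. $y \covered[\poset{L}'] x$.

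Next I would verify $\poset{P} \lext \poset{L}'$, that is, $\leq[\poset{P}] \subseteq \leq[\poset{L}']$. From $\poset{P} \lext \poset{L}$ we already have $\leq[\poset{P}] \subseteq \leq[\poset{L}]$, and the previous paragraph shows that $\leq[\poset{L}']$ is obtained from $\leq[\poset{L}]$ by deleting the single pair $\tuple{x,y}$ (and adding $\tuple{y,x}$). Hence the only pair that could witness a failure of $\leq[\poset{P}] \subseteq \leq[\poset{L}']$ is $\tuple{x,y}$ itself, which would force $x \leq[\poset{P}] y$ and contradict the hypothesis $x \incomp[\poset{P}] y$. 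Therefore the inclusion holds and $\poset{L}'$ is the desired linear extension, completing the argument.

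I do not expect a genuine obstacle here: the lemma is essentially a one-line construction followed by bookkeeping. The only points worth a careful sentence are that the single transposition keeps $x$ and $y$ adjacent — so that we genuinely recover the cover relation $y \covered[\poset{L}'] x$ and not merely the weaker $\tuple{y,x} \in \leq[\poset{L}']$ — and that $\poset{L}'$ is still a bona fide linear order, which we get for free from the definition of the swapping function rather than having to re-establish transitivity and totality.
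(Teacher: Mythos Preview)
Your argument is correct: constructing $\poset{L}'$ as the single adjacent transposition $\swapfn{\poset{L}}{x,y}$ and then checking that (i) adjacency of $x,y$ survives the swap with roles reversed, and (ii) the only pair lost from $\leq[\poset{L}]$ is $\tuple{x,y}$, which cannot lie in $\leq[\poset{P}]$ by the incomparability hypothesis, is exactly the right approach and leaves no gaps.

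The paper itself does not supply a proof for this lemma; it simply cites \cite{heath2013poset} for the result. So there is nothing in the paper to compare against, but your write-up is the standard one-step argument and would be the expected proof in any detailed treatment.
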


\begin{lemma}
    For posets $\poset{P},\poset{L},\poset{L}'$ with $\poset{L} \!\neq\! \poset{L}'$ and $\poset{P} \lext \poset{L},\poset{L}'$, $\leq[\poset{L}] \cap \leq[\poset{L}'] \subseteq\> \incomp[\poset{P}]$ and $\abs{\complmt{\leq[\poset{L}] \cap \leq[\poset{L}']}} = \inv{\poset{L}}{\poset{L}'}$.
\end{lemma}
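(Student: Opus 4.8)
The plan is to prove the two claims separately; each is short once we unwind $\lext$ and the inversion-number definition from this appendix. Write $D$ for the set $\complmt{\leq[\poset{L}] \cap \leq[\poset{L}']}$, i.e. the ordered pairs that $\poset{L}$ and $\poset{L}'$ orient oppositely: the pairs $\tuple{x,y}$ with $x \leq[\poset{L}] y$ and $y \leq[\poset{L}'] x$ (equivalently $\leq[\poset{L}] \setminus \leq[\poset{L}']$). For the containment, I would take an arbitrary $\tuple{x,y} \in D$ and show $x \incomp[\poset{P}] y$ by contradiction on the dichotomy $x \leq[\poset{P}] y$ or $y \leq[\poset{P}] x$: if $x \leq[\poset{P}] y$ then $\poset{P} \lext \poset{L}'$ gives $\leq[\poset{P}] \subseteq \leq[\poset{L}']$, hence $x \leq[\poset{L}'] y$, which together with $y \leq[\poset{L}'] x$ and $x \neq y$ violates antisymmetry of the linear order $\poset{L}'$; symmetrically, $y \leq[\poset{P}] x$ would force $y \leq[\poset{L}] x$ via $\poset{P} \lext \poset{L}$, contradicting $x \leq[\poset{L}] y$. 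So every pair in $D$ is incomparable in $\poset{P}$, giving $D \subseteq \incomp[\poset{P}]$. (Since $\poset{L} \neq \poset{L}'$, $D$ is moreover nonempty, consistent with the hypothesis of the earlier incomparability lemma.)

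For the cardinality, I would observe that $D$ contains, for each unordered pair $\set{x,y}$ on which $\poset{L}$ and $\poset{L}'$ disagree, exactly one ordered pair — the orientation $\poset{L}$ assigns it — and nothing else; hence $\abs{D}$ equals the number of discordant pairs between $\poset{L}$ and $\poset{L}'$. By the characterization already recorded here, that count is the Kendall tau / bubble-sort distance, which is the length of a minimal swapping sequence, namely $\inv{\poset{L}}{\poset{L}'}$. If a self-contained argument is preferred I would instead induct on $\inv{\poset{L}}{\poset{L}'}$: choosing an $\poset{L}$-adjacent pair $x \covered[\poset{L}] y$ with $y \leq[\poset{L}'] x$ and passing to $\swapfn{\poset{L}}{x,y}$ removes exactly the pair $\tuple{x,y}$ from $D$ while lowering the inversion number by exactly one, with the base case $\poset{L} = \poset{L}'$ giving $D = \emptyset$ and $\inv{\poset{L}}{\poset{L}'} = 0$.

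I expect the only delicate point to be the bookkeeping in the second claim — keeping track that the $x \neq y$ restriction and the choice of orientation make each disagreeing unordered pair contribute exactly one element of $D$ — after which the equality with $\inv{\poset{L}}{\poset{L}'}$ is just the ``counts discordant pairs'' fact. The containment requires nothing beyond $\leq[\poset{P}] \subseteq \leq[\poset{L}]$, $\leq[\poset{P}] \subseteq \leq[\poset{L}']$, and antisymmetry of linear orders, so there is no real obstacle there.
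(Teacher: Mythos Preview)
The paper states this lemma in the appendix without proof, so there is no paper argument to compare against; your proposal is correct and is the natural elementary argument.

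One remark worth making explicit: you have silently repaired the statement. As printed, the first clause reads $\leq[\poset{L}] \cap \leq[\poset{L}'] \subseteq\ \incomp[\poset{P}]$, which is false on its face (the intersection contains $\leq[\poset{P}]$, and those pairs are certainly comparable in $\poset{P}$). You prove instead that $\complmt{\leq[\poset{L}] \cap \leq[\poset{L}']} \subseteq\ \incomp[\poset{P}]$, with the overline, and your interpretation of that complement as $\leq[\poset{L}] \setminus \leq[\poset{L}']$ is the one that makes both clauses true and consistent with the surrounding text (the very next lemma speaks of ``the complement of their intersection'' as the inversion set). Your containment argument via $\leq[\poset{P}] \subseteq \leq[\poset{L}]$, $\leq[\poset{P}] \subseteq \leq[\poset{L}']$, and antisymmetry is exactly right, and for the cardinality you are simply invoking the ``counts discordant pairs'' description of $\inv{\poset{L}}{\poset{L}'}$ already recorded in the paragraph preceding the lemmas, so nothing more is needed. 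The optional induction you sketch is also fine but unnecessary given that description.
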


\begin{lemma}
    For posets $\poset{P},\poset{L},\poset{L}'$ with $\poset{L} \!\neq\! \poset{L}'$ and $\poset{P} \lext \poset{L},\poset{L}'$, their inversion set corresponds to the complement of their intersection.
\end{lemma}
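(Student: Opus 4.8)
The plan is to promote the cardinality identity of the previous lemma to an identity of sets. Define the \emph{inversion set} of $\poset{L}$ and $\poset{L}'$ to be $D = \leq[\poset{L}] \setminus \tuple{\leq[\poset{L}] \cap \leq[\poset{L}']}$, the ``complement of their intersection'' taken inside $\leq[\poset{L}]$; equivalently, the pairs $\tuple{x,y}$ with $x \neq y$, $x \leq[\poset{L}] y$, and $y \leq[\poset{L}'] x$. Viewed as an unordered relation this is also $\incomp[\poset{Q}]$ for the poset $\poset{Q}$ whose order is $\leq[\poset{L}] \cap \leq[\poset{L}']$. Two things then need checking: (i) these descriptions genuinely agree, and (ii) $D$ is exactly the set of pairs transposed along any minimal swapping sequence from $\poset{L}$ to $\poset{L}'$ --- the sense in which the term ``inversion set'' is used downstream.

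Part (i) I would settle by totality alone. For distinct $x,y$, linearity of $\poset{L}$ puts exactly one of $\tuple{x,y},\tuple{y,x}$ in $\leq[\poset{L}]$, and likewise for $\poset{L}'$; hence $\tuple{x,y} \in \leq[\poset{L}] \cap \leq[\poset{L}']$ iff the two orders orient the pair the same way, so $\tuple{x,y}$ lies in the complement iff the pair is discordant, iff $x \incomp[\poset{Q}] y$. The previous lemma then already supplies $\abs{D} = \inv{\poset{L}}{\poset{L}'}$ and $D \subseteq\> \incomp[\poset{P}]$ (a pair fixed in one direction by $\poset{P}$ is respected by both extensions, hence concordant), so only (ii) is left.

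For (ii) I would walk a fixed minimal swapping sequence $\poset{L} = \poset{L}_0, \dots, \poset{L}_n = \poset{L}'$ with $n = \inv{\poset{L}}{\poset{L}'}$, where $\poset{L}_k = \swapfn{\poset{L}_{k-1}}{x_k,y_k}$ exchanges two $\covered[\poset{L}_{k-1}]$-adjacent elements. Such a transposition changes the relative order of the single pair $\set{x_k,y_k}$ and of no other pair, while minimality forces $\inv{\poset{L}_k}{\poset{L}'} = \inv{\poset{L}_{k-1}}{\poset{L}'} - 1$ at every step; hence step $k$ makes the discordant pair $\set{x_k,y_k}$ concordant and never creates a new discordant pair. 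Therefore $\set{x_1,y_1}, \dots, \set{x_n,y_n}$ are pairwise distinct, each discordant between $\poset{L}$ and $\poset{L}'$, and number $\abs{D}$, so they enumerate exactly $D$; orienting each by $\poset{L}$ recovers $\leq[\poset{L}] \setminus \tuple{\leq[\poset{L}] \cap \leq[\poset{L}']}$ verbatim, which together with (i) is the asserted correspondence. Feeding this back into the previous lemma also re-exhibits $D \subseteq\> \incomp[\poset{P}]$, the form the swap-graph connectivity argument consumes.

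The routine half is (i), a pure unfolding of linearity. The step I expect to be the main obstacle is the ``no re-creation'' claim inside (ii): one must certify that along a \emph{minimal} sequence a pair, once concordant, is never flipped back, so that the transposed pairs form a set of size exactly $n$ rather than a longer multiset. This leans on the cited fact \cite{ruskey1992generating} that each swap moves the inversion number by exactly $\pm 1$, together with the locality observation that an adjacent swap disturbs only one pair; the remainder is bookkeeping.
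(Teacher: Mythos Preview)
The paper states this lemma without proof, so there is no argument of its own to compare against. Your proposal is sound: part (i) is indeed a direct unfolding of totality, and part (ii) correctly combines the locality of an adjacent transposition (it alters the relative order of exactly one pair) with the strict decrease of $\inv{\poset{L}_k}{\poset{L}'}$ along a minimal sequence to conclude that no pair is ever re-swapped, so the $n$ transposed pairs are distinct and hence exhaust the $n$ discordant pairs. One small tightening worth making explicit: the assertion ``each discordant between $\poset{L}$ and $\poset{L}'$'' uses the distinctness you have just established---because $\{x_k,y_k\}$ has not been swapped prior to step $k$, its orientation in $\poset{L}_{k-1}$ coincides with its orientation in $\poset{L}_0=\poset{L}$, so discordance with $\poset{L}'$ at step $k$ really is discordance between $\poset{L}$ and $\poset{L}'$. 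Spelling out that dependency removes any appearance of circularity.
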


\begin{lemma}[\cite{ruskey1992generating}]
    For posets $\poset{P},\poset{L},\poset{L}'$ with $\poset{L} \!\neq\! \poset{L}'$ and $\poset{P} \lext \poset{L},\poset{L}'$, the distance between $\poset{L}$ and $\poset{L}'$ in $\sgraph{\lang{\poset{P}}}$ is the length of a minimal swapping sequence from $\poset{L}$ to $\poset{L}'$.
\end{lemma}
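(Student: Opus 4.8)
The statement equates two numbers attached to $\poset{L}$ and $\poset{L}'$: the graph distance $d$ between them in $\sgraph{\lang{\poset{P}}}$, and the length $n$ of a minimal swapping sequence from $\poset{L}$ to $\poset{L}'$ — recall from the discussion above that such a sequence decreases the inversion number by one at every step, so that $n = \inv{\poset{L}}{\poset{L}'}$. The plan is to prove the two inequalities $n \leq d$ and $d \leq n$ separately, each by converting an optimal witness of one kind into a witness of the other: a shortest swap-graph path becomes a swapping sequence, and a minimal swapping sequence is shown to be a walk in the swap graph.

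For $n \leq d$, the key point is that every edge of $\sgraph{\lang{\poset{P}}}$ is realized by the swapping function. If $\poset{L}_1 \swap[x,y] \poset{L}_2$ with, say, $x \leq[\poset{L}_1] y$, then no element lies strictly between $x$ and $y$ in $\poset{L}_1$: such a $z$ would give $x \leq[\poset{L}_2] z$ and $z \leq[\poset{L}_2] y$ (these are untouched by the swap) together with $y \leq[\poset{L}_2] x$, contradicting antisymmetry of $\leq[\poset{L}_2]$. Hence $x \covered[\poset{L}_1] y$ and $\poset{L}_2 = \swapfn{\poset{L}_1}{x,y}$. Consequently a shortest path $\poset{L} = \poset{L}_0, \poset{L}_1, \dots, \poset{L}_d = \poset{L}'$ in $\sgraph{\lang{\poset{P}}}$ is itself a swapping sequence of length $d$, so $n \leq d$.

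For $d \leq n$, I would take a minimal swapping sequence $\poset{L} = \poset{L}_0, \poset{L}_1, \dots, \poset{L}_n = \poset{L}'$ and show by induction on $k$ that $\poset{P} \lext \poset{L}_k$; the base case is the hypothesis $\poset{P} \lext \poset{L}$. For the inductive step, let $x_k \covered[\poset{L}_{k-1}] y_k$ be the adjacent pair transposed at step $k$. Since minimality forces this step to decrease the inversion number relative to $\poset{L}'$, the pair $\set{x_k,y_k}$ is discordant between $\poset{L}_{k-1}$ and $\poset{L}'$, that is $y_k \leq[\poset{L}'] x_k$; with $\poset{P} \lext \poset{L}'$ and antisymmetry of $\leq[\poset{L}']$ this yields $x_k \not\leq[\poset{P}] y_k$. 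By the definition of $\swap[x_k,y_k]$ the relation $\leq[\poset{L}_k]$ is obtained from $\leq[\poset{L}_{k-1}]$ by deleting only the pair $\tuple{x_k,y_k}$ and inserting $\tuple{y_k,x_k}$; since $\tuple{x_k,y_k} \notin \leq[\poset{P}] \subseteq \leq[\poset{L}_{k-1}]$ by the induction hypothesis, we keep $\leq[\poset{P}] \subseteq \leq[\poset{L}_k]$ and $\poset{L}_k$ stays linear, so $\poset{P} \lext \poset{L}_k$. Thus every $\poset{L}_k$ is a vertex of $\sgraph{\lang{\poset{P}}}$, consecutive ones are adjacent, and the sequence is a walk of length $n$ from $\poset{L}$ to $\poset{L}'$ inside $\sgraph{\lang{\poset{P}}}$, giving $d \leq n$; combined with the previous paragraph, $d = n$. (Since the step also forces $y_k \not\leq[\poset{P}] x_k$, the transposed pair is $\poset{P}$-incomparable, and one could instead invoke the earlier lemma stating that swapping an adjacent $\poset{P}$-incomparable pair yields another linearization of $\poset{P}$.)

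The two routine ingredients — that a swap edge transposes a covering pair, so a swap-graph path unravels into a swapping sequence, and that deleting a $\poset{P}$-unconstrained pair from a linearization of $\poset{P}$ leaves a linearization of $\poset{P}$ — drop out once the definitions are unfolded. The one place that genuinely needs pinning down is the inductive step of the $d \leq n$ direction, and its crux is the small observation that the pair transposed at each step of a \emph{minimal} sequence must be an inversion with respect to $\poset{L}'$ and hence cannot be ordered by $\poset{P}$; I expect this — chiefly the bookkeeping of which direction each inversion count moves — rather than any global structural argument, to be the only delicate point.
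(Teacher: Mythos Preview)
Your argument is correct. The paper itself does not supply a proof of this lemma: it is stated with a citation to \cite{ruskey1992generating} and left unproved, as are the neighbouring lemmas in the appendix. There is therefore no paper proof to compare against, only the brief preparatory remark that each swap changes the inversion number by~$\pm 1$ and that a minimal sequence is one where it decreases at every step --- which you correctly invoke.

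Two small observations. First, your $d \leq n$ direction in fact establishes the paper's \emph{next} lemma (that every member of a minimal swapping sequence between two linearizations of $\poset{P}$ is itself a linearization of $\poset{P}$), which the paper likewise states without proof; so you have supplied more than was asked. Second, your $n \leq d$ direction is almost tautological once one unwinds the definitions: the paper defines the edge set of $\sgraph{\lang{\poset{P}}}$ to be precisely the swap relation $\swap$ restricted to $\lang{\poset{P}}$, and a swapping sequence is a sequence of swaps, so any path in the swap graph is literally a swapping sequence of the same length. Your care in checking that the swapped pair is a cover in $\poset{L}_1$ is warranted only because the swapping \emph{function} $\swapfn{\poset{L}}{x,y}$ is defined for covering pairs, but the swap \emph{relation} already forces this, as you show.
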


\begin{lemma}
    For posets $\poset{P},\poset{L},\poset{L}'$ with $\poset{L} \!\neq\! \poset{L}'$ and $\poset{P} \lext \poset{L},\poset{L}'$, for any minimal swapping sequence $\pi$ from $\poset{L}$ to $\poset{L}'$, we have $\poset{P} \lext \poset{L}''$ for all $\poset{L}'' \!\in\! \pi$.
\end{lemma}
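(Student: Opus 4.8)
The plan is to prove the claim by induction on the index $k$ along the minimal swapping sequence $\pi : \poset{L} = \poset{L}_0, \poset{L}_1, \ldots, \poset{L}_n = \poset{L}'$, establishing $\poset{P} \lext \poset{L}_k$ for every $0 \le k \le n$. The base case $k = 0$ is immediate, since $\poset{L}_0 = \poset{L}$ and $\poset{P} \lext \poset{L}$ by hypothesis (and $k = n$ recovers the other hypothesis $\poset{P} \lext \poset{L}'$); the content is in the intermediate indices.

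For the inductive step, assume $\poset{P} \lext \poset{L}_k$ with $k < n$, and let $x, y \!\in\! \uni$ with $x \covered[\poset{L}_k] y$ be the pair whose swap yields $\poset{L}_{k+1} = \swapfn{\poset{L}_k}{x,y}$. These two linear orders differ only on this one pair: $\leq[\poset{L}_{k+1}] = (\leq[\poset{L}_k] \setminus \set{\tuple{x,y}}) \cup \set{\tuple{y,x}}$. Because $\pi$ is \emph{minimal}, we have $\inv{\poset{L}_{k+1}}{\poset{L}'} = \inv{\poset{L}_k}{\poset{L}'} - 1$ by the earlier lemma, and since a single adjacent transposition alters the relative order of exactly the pair $\set{x,y}$, this drop forces $\set{x,y}$ to be discordant between $\poset{L}_k$ and $\poset{L}'$, i.e. $y \leq[\poset{L}'] x$.

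It then remains to check $\leq[\poset{P}] \subseteq\> \leq[\poset{L}_{k+1}]$. From $\poset{P} \lext \poset{L}'$ we get $\leq[\poset{P}] \subseteq\> \leq[\poset{L}']$, and since $y \leq[\poset{L}'] x$ with $x \!\neq\! y$, antisymmetry of $\poset{L}'$ gives $x \not\leq[\poset{L}'] y$, hence $\tuple{x,y} \notin\> \leq[\poset{P}]$. Combining this with the induction hypothesis $\leq[\poset{P}] \subseteq\> \leq[\poset{L}_k]$ yields $\leq[\poset{P}] \subseteq\> \leq[\poset{L}_k] \setminus \set{\tuple{x,y}} \subseteq\> \leq[\poset{L}_{k+1}]$. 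As $\poset{L}_{k+1}$ is linear by construction of the swapping function, $\poset{P} \lext \poset{L}_{k+1}$, which closes the induction; every $\poset{L}'' \!\in\! \pi$ therefore satisfies $\poset{P} \lext \poset{L}''$.

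The step I expect to need the most care is the claim that the swapped pair is discordant with respect to $\poset{L}'$, which is precisely where minimality of $\pi$ enters, via the already-established facts that each swap changes $\inv{\cdot}{\poset{L}'}$ by exactly $\pm 1$ and that a minimal sequence only ever decreases it. As an alternative to this inductive bookkeeping, one could argue directly that along a geodesic no pair is ever flipped twice, so each $\poset{L}_k$ agrees with both $\poset{L}$ and $\poset{L}'$ on all pairs in $\leq[\poset{L}] \cap \leq[\poset{L}']$, and that intersection already contains $\leq[\poset{P}]$ by a preceding lemma; but the induction above seems cleanest to write out in full.
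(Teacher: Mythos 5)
Your induction is correct and is essentially the argument the paper intends: the surrounding lemmas (each swap changes $\inv{\cdot}{\poset{L}'}$ by exactly $\pm 1$, a minimal sequence decreases it at every step, and pairs discordant between $\poset{L}$ and $\poset{L}'$ cannot lie in $\leq[\poset{P}]$) are precisely the facts you invoke to show the swapped pair $\tuple{x,y}$ is absent from $\leq[\poset{P}]$, so that $\leq[\poset{P}] \subseteq\> \leq[\poset{L}_{k+1}]$. The paper leaves this lemma without an explicit write-up, and your step-by-step bookkeeping (equivalently, your closing observation that a geodesic never disturbs a pair in $\leq[\poset{L}] \cap \leq[\poset{L}']$) is the natural formalization of its lemma chain.
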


\begin{theorem}
    \label{theorem:connected}
    For a poset $\poset{P}$, $\sgraph{\lang{\poset{P}}}$ is connected by exactly all the minimal swapping sequences.
\end{theorem}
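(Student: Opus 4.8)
The theorem packages two assertions: that $\sgraph{\lang{\poset{P}}}$ is connected, and that the paths witnessing this connectivity are exactly the minimal swapping sequences between linearizations --- equivalently, the geodesics of the graph. The plan is to prove connectivity first, by an induction that simultaneously constructs a minimal swapping sequence, and then to read off the geodesic characterisation from the distance lemma (\cite{ruskey1992generating}).

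For connectivity, fix $\poset{L} \neq \poset{L}'$ in $\lang{\poset{P}}$ and induct on $\inv{\poset{L}}{\poset{L}'}$. If $\inv{\poset{L}}{\poset{L}'} = 0$ then $\poset{L} = \poset{L}'$ and there is nothing to prove. For the step, the key sub-claim is that some pair $x \covered[\poset{L}] y$ disagrees in $\poset{L}'$, i.e.\ $y \leq[\poset{L}'] x$: if every cover pair of $\poset{L}$ agreed with $\poset{L}'$, transitivity of $\leq[\poset{L}']$ would give $\leq[\poset{L}] \subseteq \leq[\poset{L}']$ and hence $\poset{L} = \poset{L}'$. For such a pair necessarily $x \incomp[\poset{P}] y$ --- $x \leq[\poset{P}] y$ would force $x \leq[\poset{L}'] y$, while $y \leq[\poset{P}] x$ would contradict $x \covered[\poset{L}] y$ --- so the lemma of \cite{heath2013poset} above applies and yields $\poset{L}_1 = \swapfn{\poset{L}}{x,y} \in \lang{\poset{P}}$ with $\poset{L} \swap[x,y] \poset{L}_1$, an edge of $\sgraph{\lang{\poset{P}}}$. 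Since $x,y$ are adjacent in $\poset{L}$, this swap fixes exactly the discordant pair $\set{x,y}$, so $\inv{\poset{L}_1}{\poset{L}'} = \inv{\poset{L}}{\poset{L}'} - 1$; the induction hypothesis supplies a path from $\poset{L}_1$ to $\poset{L}'$ inside $\sgraph{\lang{\poset{P}}}$, and prepending $\poset{L}$ gives one from $\poset{L}$ to $\poset{L}'$. By construction the inversion number to $\poset{L}'$ drops by one at each step, so this path is a minimal swapping sequence, establishing ``connected by minimal swapping sequences''. (A shorter but less constructive route: any two permutations over $\uni$ are joined by a bubble-sort sequence of adjacent transpositions of length $\inv{\poset{L}}{\poset{L}'}$, hence by a minimal swapping sequence, and the lemma that such a sequence stays inside $\lang{\poset{P}}$ turns it into a path in $\sgraph{\lang{\poset{P}}}$.)

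For ``exactly all the minimal swapping sequences'', connectivity makes the graph distance well defined, and the distance lemma (\cite{ruskey1992generating}) gives that the distance between $\poset{L}$ and $\poset{L}'$ is $\inv{\poset{L}}{\poset{L}'}$, the length of a minimal swapping sequence. Hence every minimal swapping sequence between $\poset{L}$ and $\poset{L}'$ lies in $\sgraph{\lang{\poset{P}}}$ (by the lemma on such sequences) and has length equal to the distance, so it is a geodesic. Conversely, let $\poset{L} = \poset{L}_0, \dots, \poset{L}_n = \poset{L}'$ be any geodesic; every step is a swap, and a swap changes the inversion number relative to $\poset{L}'$ by exactly $\pm 1$ (as recalled in the appendix), so over $n = \inv{\poset{L}}{\poset{L}'}$ steps the total change $-\inv{\poset{L}}{\poset{L}'}$ forces every step to decrease it by one; that is, the geodesic is a minimal swapping sequence. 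Thus minimal swapping sequences between two linearizations coincide with geodesics, which is the ``exactly''.

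The main obstacle is the inductive step above: securing at each stage an \emph{adjacent} pair of $\poset{L}$ that is discordant with $\poset{L}'$, so that the incomparability in $\poset{P}$ is automatic, the lemma of \cite{heath2013poset} applies, and the inversion count provably drops by exactly one; the rest is bookkeeping. (Under the bubble-sort route the obstacle instead becomes the standard but slightly fiddly fact that the Kendall--tau distance of two permutations is realised by an adjacent-transposition sequence matching the appendix's minimal swapping sequence.)
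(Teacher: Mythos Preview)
Your proposal is correct and matches the paper's intended argument: the paper states the theorem without an explicit proof, leaving it as the immediate consequence of the two lemmas you invoke --- that every minimal swapping sequence between $\poset{L},\poset{L}' \in \lang{\poset{P}}$ stays inside $\lang{\poset{P}}$, and that the graph distance equals $\inv{\poset{L}}{\poset{L}'}$ --- together with the $\pm 1$ observation you use for the converse. Your inductive construction of a connecting path is sound but slightly redundant given those lemmas; your own parenthetical ``shorter route'' is precisely how the paper expects the reader to assemble the pieces.
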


\begin{corollary}
    For a set of linear posets $\Upsilon$, there is a poset $\poset{P}$ with $\lang{\poset{P}} = \Upsilon$ if and only if for all $\poset{L}, \poset{L}' \!\in\! \Upsilon$ and for all minimal swapping sequence $\pi$ from $\poset{L}$ to $\poset{L}'$, we have $\poset{L}'' \!\in\! \Upsilon$ for all $\poset{L}'' \!\in\! \pi$.
\end{corollary}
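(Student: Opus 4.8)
The plan is to derive both directions from Theorem~\ref{theorem:connected} and its supporting lemmas, the forward implication being essentially a one-line consequence and the converse carrying the real work; I tacitly assume $\Upsilon \neq \emptyset$, since no poset has empty language and the right-hand condition is vacuous otherwise. For the forward direction, suppose $\lang{\poset{P}} = \Upsilon$ and take $\poset{L}, \poset{L}' \in \Upsilon$ together with a minimal swapping sequence $\pi$ from $\poset{L}$ to $\poset{L}'$; if $\poset{L} = \poset{L}'$ there is nothing to show, and otherwise $\poset{P} \lext \poset{L}$ and $\poset{P} \lext \poset{L}'$, so the last lemma before Theorem~\ref{theorem:connected} --- that $\poset{P}$ extends every member of a minimal swapping sequence joining two of its linearizations --- gives $\poset{P} \lext \poset{L}''$, i.e.\ $\poset{L}'' \in \lang{\poset{P}} = \Upsilon$, for all $\poset{L}'' \in \pi$.

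For the converse, assume the closure hypothesis. By the consequence of Szpilrajn's extension theorem recalled earlier (a poset equals the intersection of its linearizations), the only conceivable witness is the poset $\poset{P}$ with $\leq[\poset{P}] = \bigcap_{\poset{L}\in\Upsilon}\leq[\poset{L}]$; this is a poset, being an intersection of partial orders on $\uni$, and $\Upsilon \subseteq \lang{\poset{P}}$ is immediate since $\leq[\poset{P}] \subseteq \leq[\poset{L}]$ for every $\poset{L}\in\Upsilon$. The substance is the reverse inclusion $\lang{\poset{P}} \subseteq \Upsilon$, which I would obtain from the following single-swap claim: \emph{if $\poset{L}\in\Upsilon$ and $\poset{L} \swap[x,y] \poset{L}'$ with $\poset{L}' \in \lang{\poset{P}}$, then $\poset{L}'\in\Upsilon$}. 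Granting the claim, fix $\poset{L}_0 \in \Upsilon$; for an arbitrary $\poset{M} \in \lang{\poset{P}}$, connectedness of $\sgraph{\lang{\poset{P}}}$ (Theorem~\ref{theorem:connected}) supplies a path $\poset{L}_0 = \poset{N}_0 \swap \poset{N}_1 \swap \cdots \swap \poset{N}_m = \poset{M}$ with every $\poset{N}_j \in \lang{\poset{P}}$, and applying the claim step by step along it gives $\poset{N}_j \in \Upsilon$ for all $j$, hence $\poset{M}\in\Upsilon$.

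The argument for the single-swap claim runs as follows. Assume without loss of generality $x \covered[\poset{L}] y$, so $\poset{L}' = \swapfn{\poset{L}}{x,y}$ agrees with $\poset{L}$ on every pair other than $\set{x,y}$, and since $\poset{L}$ and $\poset{L}'$ are both linearizations of $\poset{P}$ but order $x$ and $y$ oppositely we get $x \incomp[\poset{P}] y$. As $\leq[\poset{P}] = \bigcap_{\poset{M}\in\Upsilon}\leq[\poset{M}]$ and $\tuple{x,y} \notin \leq[\poset{P}]$, there is $\poset{L}_1 \in \Upsilon$ with $\tuple{y,x} \in \leq[\poset{L}_1]$; then $\poset{L} \neq \poset{L}_1$, the pair $\set{x,y}$ is discordant between $\poset{L}$ and $\poset{L}_1$ but concordant between $\poset{L}'$ and $\poset{L}_1$, and $\poset{L}$ and $\poset{L}'$ agree on all other pairs, so $\inv{\poset{L}'}{\poset{L}_1} = \inv{\poset{L}}{\poset{L}_1} - 1$. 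Prepending the swap $\poset{L} \to \poset{L}'$ to a minimal swapping sequence from $\poset{L}'$ to $\poset{L}_1$ (which exists by Theorem~\ref{theorem:connected}) produces a swapping sequence from $\poset{L}$ to $\poset{L}_1$ of length $\inv{\poset{L}}{\poset{L}_1}$, hence a minimal one; therefore $\poset{L}'$ occurs on a minimal swapping sequence between the elements $\poset{L}, \poset{L}_1 \in \Upsilon$, and the closure hypothesis forces $\poset{L}' \in \Upsilon$.

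The only genuinely non-routine step is this single-swap claim, and within it the two moves it rests on: manufacturing a linearization $\poset{L}_1 \in \Upsilon$ lying on the far side of the swapped pair, via the intersection description of $\leq[\poset{P}]$, and recognising that prepending one ``productive'' adjacent transposition to a geodesic leaves it a geodesic, which relies on the appendix's identity that a minimal swapping sequence has length equal to the inversion number. The remaining ingredients --- connectedness of the swap graph, the extension-along-a-geodesic lemma, and the intersection form of a poset --- are all quoted directly from the preceding development.
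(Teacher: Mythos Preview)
Your proof is correct. The paper states this result as an unproved corollary of Theorem~\ref{theorem:connected} and the preceding lemmas, so there is no detailed argument to compare against; your write-up supplies exactly the derivation the paper leaves implicit, drawing only on the ingredients the appendix already provides (the intersection description of $\leq[\poset{P}]$, the extension-along-a-geodesic lemma, connectedness of $\sgraph{\lang{\poset{P}}}$, and the identification of geodesic length with inversion number).

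The one place worth a remark is your single-swap claim in the converse. This is the genuinely non-obvious step, and your argument for it is clean: from $x \incomp[\poset{P}] y$ and the intersection form of $\leq[\poset{P}]$ you extract a witness $\poset{L}_1 \in \Upsilon$ on the far side of the swap, observe that prepending the swap $\poset{L}\to\poset{L}'$ to a geodesic $\poset{L}'\to\poset{L}_1$ still yields a geodesic because the inversion count drops by exactly one, and then invoke the closure hypothesis on the endpoints $\poset{L},\poset{L}_1 \in \Upsilon$. Note that you do not actually need Theorem~\ref{theorem:connected} to produce the minimal swapping sequence from $\poset{L}'$ to $\poset{L}_1$---such a sequence exists between any two linear orders (bubble sort)---but citing it is harmless since both $\poset{L}'$ and $\poset{L}_1$ lie in $\lang{\poset{P}}$. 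The induction along a connecting path in $\sgraph{\lang{\poset{P}}}$ then finishes the inclusion $\lang{\poset{P}}\subseteq\Upsilon$ exactly as you describe.
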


\begin{corollary}
    \label{corollary:final}
    For a set of linear posets $\Upsilon$, define $\poset{P}$ such that $\leq[\poset{P}] = \bigcap_{\poset{L} \in \Upsilon} \leq[\poset{L}]$. $\lang{\poset{P}} = \Upsilon$ if and only if for all $\poset{L} \!\in\! \Upsilon$ where $x \covered[\poset{L}] y$ and $x \incomp[\poset{P}] y$, $\swapfn{\poset{L}}{x,y} \!\in\! \Upsilon$.
\end{corollary}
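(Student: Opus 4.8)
The plan is to prove the two implications separately. Throughout, write $\poset{P}$ for the poset with $\leq[\poset{P}] = \bigcap_{\poset{L} \in \Upsilon} \leq[\poset{L}]$ (assuming $\Upsilon \neq \emptyset$, the empty case being degenerate). One fact needs no hypothesis at all: $\leq[\poset{P}] \subseteq \leq[\poset{L}]$ for every $\poset{L} \in \Upsilon$, hence $\poset{P} \lext \poset{L}$, so $\Upsilon \subseteq \lang{\poset{P}}$; thus ``$\lang{\poset{P}} = \Upsilon$'' reduces to ``$\lang{\poset{P}} \subseteq \Upsilon$''. Also, by the fact recalled earlier that a poset is exactly the intersection of all its linearizations, any poset $\poset{Q}$ with $\lang{\poset{Q}} = \Upsilon$ satisfies $\leq[\poset{Q}] = \bigcap_{\poset{L} \in \Upsilon} \leq[\poset{L}] = \leq[\poset{P}]$, i.e. $\poset{Q} = \poset{P}$; so establishing merely that \emph{some} poset has language $\Upsilon$ already identifies it as $\poset{P}$.

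For the forward direction, assume $\lang{\poset{P}} = \Upsilon$ and take $\poset{L} \in \Upsilon$ with $x \covered[\poset{L}] y$ and $x \incomp[\poset{P}] y$. The poset $\swapfn{\poset{L}}{x,y}$ is linear and its order relation coincides with $\leq[\poset{L}]$ except that the pair $\tuple{x,y}$ is replaced by $\tuple{y,x}$; since $x \incomp[\poset{P}] y$, neither directed pair belongs to $\leq[\poset{P}]$, so $\leq[\poset{P}] \subseteq \leq[\poset{L}] \setminus \set{\tuple{x,y}}$ is contained in the order of $\swapfn{\poset{L}}{x,y}$. Hence $\poset{P} \lext \swapfn{\poset{L}}{x,y}$, so $\swapfn{\poset{L}}{x,y} \in \lang{\poset{P}} = \Upsilon$.

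For the converse, assume the swap-closure hypothesis. By the preceding corollary (the characterization of when $\Upsilon$ is a single poset's language via minimal swapping sequences) it suffices to show that for all $\poset{L}, \poset{L}' \in \Upsilon$ and every minimal swapping sequence $\poset{L} = \poset{L}_0, \ldots, \poset{L}_n = \poset{L}'$, every $\poset{L}_i$ lies in $\Upsilon$; then some poset has language $\Upsilon$, which by the first paragraph is $\poset{P}$. Induct on $n = \inv{\poset{L}}{\poset{L}'}$; the base case $n = 0$ is immediate. For the step, the first transposition takes $\poset{L}_0$ to $\poset{L}_1 = \swapfn{\poset{L}_0}{x,y}$ with $x \covered[\poset{L}_0] y$, and minimality of the sequence means the inversion number toward $\poset{L}'$ strictly decreases at this step, which is possible only if $\tuple{x,y}$ is already an inversion between $\poset{L}_0$ and $\poset{L}'$ — so $\poset{L}_0$ and $\poset{L}'$ order $x$ and $y$ oppositely. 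Since $\poset{P} \lext \poset{L}_0$ and $\poset{P} \lext \poset{L}'$, neither $x \leq[\poset{P}] y$ nor $y \leq[\poset{P}] x$ can hold, so $x \incomp[\poset{P}] y$ (this is precisely the appendix lemma that a pair inverted between two linearizations of $\poset{P}$ is $\poset{P}$-incomparable). The hypothesis therefore applies to $\poset{L}_0 \in \Upsilon$ and yields $\poset{L}_1 = \swapfn{\poset{L}_0}{x,y} \in \Upsilon$; as $\poset{L}_1, \ldots, \poset{L}_n$ is a minimal swapping sequence of length $n - 1$ from $\poset{L}_1 \in \Upsilon$ to $\poset{L}' \in \Upsilon$, the induction hypothesis places $\poset{L}_1, \ldots, \poset{L}_n$ in $\Upsilon$, finishing the induction.

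The routine parts are the automatic inclusion $\Upsilon \subseteq \lang{\poset{P}}$, the forward direction, and the identification of the poset supplied by the preceding corollary with $\poset{P}$. The main obstacle is the inductive step of the converse: one must see that a purely \emph{local} hypothesis — closure of $\Upsilon$ under a single adjacent transposition of a $\poset{P}$-incomparable pair — is enough to drag an entire minimal swapping sequence into $\Upsilon$, and the pivot for this is the observation that the pair transposed at the first step of such a sequence is necessarily incomparable in $\poset{P}$, which is exactly what makes the hypothesis applicable at every step.
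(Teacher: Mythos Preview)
Your argument is correct and follows the route the paper implicitly intends: the statement is listed as a corollary immediately after the characterization via minimal swapping sequences, with no separate proof given, and you derive it from exactly that characterization together with the Szpilrajn-type intersection fact recalled in the body of the paper. The forward direction is essentially the paper's Lemma~2 specialized to the present $\poset{P}$, and your inductive step for the converse unpacks precisely why the local swap-closure hypothesis propagates along minimal swapping sequences (via the observation that the pair swapped at each step is an inversion between two members of $\Upsilon$ and hence $\poset{P}$-incomparable). One minor streamlining: rather than invoking the preceding corollary to obtain \emph{some} poset with language $\Upsilon$ and then identifying it with $\poset{P}$, you could argue the inclusion $\lang{\poset{P}} \subseteq \Upsilon$ directly --- take any $\poset{L}' \in \lang{\poset{P}}$, any $\poset{L} \in \Upsilon$, and run your same induction along a minimal swapping sequence from $\poset{L}$ to $\poset{L}'$ inside $\sgraph{\lang{\poset{P}}}$ (Theorem~\ref{theorem:connected} guarantees one exists) --- but this is the same induction with one fewer layer of indirection, not a different idea.
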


\begin{credits}
\subsubsection{\ackname}
This project is partially supported by X from Y.
\end{credits}

\bibliography{poset}
\bibliographystyle{splncs04}

\end{document}